\newtheorem{prop}{Proposition}
\begin{document}
\title{Reconfigurable Intelligent Surfaces for the Connectivity of Autonomous Vehicles}

\author{Y. Ugur~Ozcan,
        Ozgur~Ozdemir
        and~Gunes~Karabulut~Kurt,~\IEEEmembership{Senior Member,~IEEE}
\thanks{The authors are with the Department
of Electronics and Communication Engineering, Istanbul Technical University, 34469 Istanbul, Turkey (e-mail: \{ozcanyi, ozdemiroz3, gkurt\}@itu.edu.tr).}
}

\maketitle

\begin{abstract}
The use of real-time software-controlled reconfigurable intelligent surface (RIS) units is proposed to increase the reliability of vehicle-to-everything (V2X) communications. The optimum placement problem of the RIS units is formulated by considering their sizes and operating modes. The solution of the problem is given, where it is shown that the placement of the RIS depends on the locations of the transmitter and the receiver. The proposed RIS-supported highway deployment can combat the high path loss experienced by the use of higher frequency bands, including the millimeter-wave and the terahertz bands, that are expected to be used in the next-generation wireless networks, enabling the use of the existing base station deployment plans to remain operational, while providing reliable and energy-efficient connectivity for autonomous vehicles.
\end{abstract}

\begin{IEEEkeywords}
Connected autonomous vehicles (CAV),  reconfigurable intelligent surfaces (RISs),   vehicle-to-everything (V2X) communications. 
\end{IEEEkeywords}

\IEEEpeerreviewmaketitle

\section{Introduction}

\IEEEPARstart{V}{ehicle}-to-everything (V2X) communications need to be supported to reveal to the full potential of the connected autonomous vehicle (CAV) applications since even a packet is important, and its loss may lead to serious consequences. The reliability of a communication link mostly depends on the signal-to-noise ratio (SNR) and the fading characteristics of the corresponding wireless channel. 
The evolution of wireless networks introduces the exploration of higher frequency bands to alleviate the severe spectrum scarcity problem that is encountered in the sub-6 GHz band. The use of millimeter-wave (mmWave) band in the 5G New Radio (NR) is a reality today \cite{3gpp}. Furthermore, the use of terahertz bands is expected in 6G networks \cite{6g}, placing further constraints on the reliability requirements as a higher path loss is experienced in higher bands, leading to possibly low SNR values. One solution to address this problem is to increase the transmit power, yet this severely reduces the energy efficiency of the network. Another solution is to reduce the distance between the transmitters and the receivers; however, this solution is also costly from CAPEX and OPEX perspectives as it requires a new base station (BS) deployment plan.

In this study, the use of real-time software-controlled reconfigurable intelligent surfaces (RISs) is proposed to increase reliability for V2X communications. Being able to increase the received signal power and eliminate the fading effect simultaneously, RIS units can be considered as a candidate for the reliable operations of CAVs.  Consisting of meta-material pixels, RIS units can control the reflection angle by altering the reflection coefficients and phase shifts of these pixels independently from each other, in real-time via software.

Energy-efficient beamforming is pursued by the use of RISs in \cite{uc} and \cite{beam}. In \cite{sec} and \cite{sec2}, physical layer security systems are proposed by deploying a RIS. There are also studies focusing on the physical aspects of the RIS, which have crucial effects on the identification of the reflection behaviors, as in \cite{ozgecan} and \cite{nearfar}. However, their roadside use for reliable V2X systems to improve the average SNR, as shown in Fig. \ref{tam}, is not yet considered in the literature. 
\begin{figure}[t]
\centering
\includegraphics[width=\columnwidth]{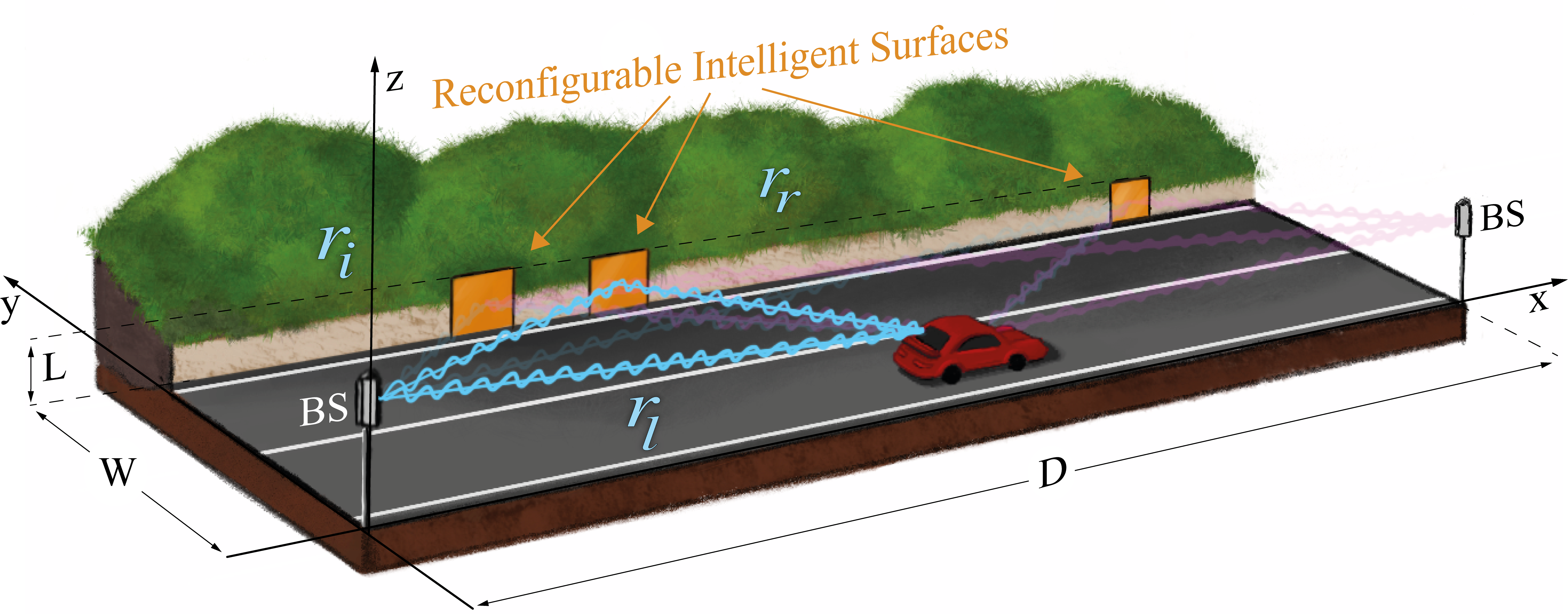}
\captionsetup{justification=centering}
\caption{The roadside deployment scenario for multiple RIS units targeting reliably connected autonomous vehicles.}
\label{tam}
\end{figure}
In this study, we study the optimum placement of the RIS units by considering their sizes and the operating mode. The beamforming and the focusing modes are considered.
The contributions  can be listed as: 
\begin{itemize}
  \item In order to determine the optimum coordinates for the RIS units, an optimization problem has been formulated and solved.
  \item Extensive numerical results are provided to quantify the impact of the proposed solution, which depends on the locations of the transmitter and the receiver. A comparison with the intuitive equidistant placement is given. Results show that significant SNR gains can be obtained by the optimal deployment of the available RIS units.
\end{itemize}
 
 As the use of the mmWave and the THz band in the wireless networks are becoming a reality, our proposed solution can deem the existing coverage plans for sub-6 GHz band inapplicable in sparsely populated highways, where the distance between BSs is determined according to the coverage constraints with the goal of minimizing the infrastructure-related and the associated operational costs. 
 
 The rest of our paper is organized as follows.  In Section II, necessary prior information is given. In Section III, the system configuration for rural highways is described and the optimization problem is presented. Numerical results are presented in Section IV. Conclusions are given in Section V.

\section{Propagation through RIS}
Two main propagation modeling approaches are discussed in the literature for RIS-aided communication systems. In the first approach, RIS units are considered as specular reflectors \cite{ertugrul},  while in the second approach they are considered as diffuse scatterers \cite{ozgecan}.  Both models are studied below.

We consider the downlink of a cell-free transmission scenario, where $K$  BSs and a vehicle is communicating. The total received line-of-sight (LoS) signal, $S_{l}\left(t;r_l\right)$, coming from all BSs for a unit powered message ($u(t)$) is modeled as \cite{goldsmith} 
\begin{equation}
\begin{split}
\label{LOS}
&{S}_{l}\left(t;r_l\right) = \left\{\frac{\lambda}{4\pi} \sqrt{G_B G_V}\left[ \sum_{k=1}^{K}\frac{u(t) e^{-j2\pi (r_{l}^k/\lambda)}}{r_{l}^k}\right]\right\},
 \end{split}
\end{equation} 
where  $r_l^k$ is the distance between the vehicle and the $k^{\textrm{th}}$ transmitter. $G_B$ and $G_V$ represent the antenna gains of the BSs and the vehicle. The signal is assumed to be narrowband relative to the delay spread, hence, time differences amongst different paths from $K$ BSs have been omitted. 

In the specular model, RIS units behave like perfect mirrors which reflect every incident wave according to the Snell's reflection law, i.e., the beam curvature of the reflected wave is the same of the incident beam curvature \cite{balanis}.  Therefore, the incident wave is not affected from an extra path loss. The reflected wave's path loss is equal to the same path loss of a wave, which has the same travelling distance with the reflected one. Then, the baseband received signal from a specular reflector, $S_{s}\left(t;r_i,{r_r}\right)$, can be modeled as
\begin{equation}
\begin{split}
\label{specular}
{S}_{s}&\left(t;r_i,{r_r}\right) = \left\{\frac{\lambda}{4\pi}\sqrt{G_B G_V} \cdot \right.\\ &\left.\left[ \sum_{k=1}^{K}\sum_{n=1}^{N}\frac{\sqrt{G_I^{n}(\mathbf{\hat{r}}_i^{kn})G_I^{n}(-\mathbf{\hat{r}}_r^{n})}
u(t)e^{-j2\pi ((r_{i}^{kn}+r_{r}^{kn})/\lambda)}}{r_{i}^{kn}+r_{r}^{n}}\right]\right\}.
\end{split}
\end{equation}
In the expression above, the received signal is coming from $K$ transmitters via $N$  RIS units, while $r_{i}^{kn}$ and $r_{r}^{n}$ denote the incident and the reflected path distances for the signals coming from the $k^{th}$ transmitter via $n^{th}$ RIS, respectively.  $G_I^{n}(-\mathbf{\hat{r}}_{i}^{kn})$ and $G_I^n(\mathbf{\hat{r}}_{r}^{n})$ are the gains in the incident direction, $\mathbf{\hat{r}}_{i}^{kn}$, and reflection direction, $\mathbf{\hat{r}}_{r}^{n}$,  of the $n^{th}$ RIS, respectively. 

In the diffuse scatterer model, RIS units scatter the incident wave in all directions instead of reflecting it by conserving its original beam curvature as in the specular model. 
Therefore not all the transmitted signal can be caught by the receiving antenna and this results in extra power loss on the receiver side. The total received signal from a diffuse scatterer can be written as \cite{nearfar}  
\begin{equation}
\begin{split}
\label{diffuse}
{S}_{d}&\left(t;r_i,{r_r}\right) = \left\{\frac{A}{4\pi}\sqrt{G_B G_V} \cdot \right.\\ &\left.\left[ \sum_{k=1}^{K}\sum_{n=1}^{N}\frac{\sqrt{G_I^{n}(-\mathbf{\hat{r}}_i^{kn})G_I^{n}(\mathbf{\hat{r}}_r^{n})}
u(t)e^{-j2\pi ((r_{i}^{kn}+r_{r}^{kn})/\lambda)}}{r_{i}^{kn}r_{r}^{n}}\right]\right\}
\end{split}
\end{equation}
where  $A$ is surface area of a RIS unit. 
Note that denominator takes a multiplicative form in this case to model extra power loss  which is more severe than (\ref{specular}). 
The physical interpretation of this situation stems from the fact that the size of the RIS units determines the effective aperture of the RIS which is equal to $ A_{e}  = \frac{\lambda^2}{4\pi} G$. The effective aperture is proportional to the antenna gain $G$ in that direction which is equal to $G = \epsilon \mathcal{D}$, where $\epsilon$ is the efficiency and the $\mathcal{D}$ is the directivity. When the size of a RIS unit is much smaller than the wavelength and the distances between the antennas, it can be considered as a point source that radiates omnidirectionally and its directivity is equal to one.  On the other hand, as the size of the RIS unit gets larger, the directivity increases and the beam gets narrower.  Thus, the specular reflection implies very high directivity which requires an infinitely large RIS, as stated in \cite{balanis}. 

The key point in propagation modeling is to select the most suitable model for the given circumstances. The specular reflection assumption is deployed for ground and building reflections in communication system designs, even though they are not infinitely large \cite{goldsmith}. On the other hand, if the receiver gets closer to the RIS, it can catch more scattered beams since these beams are propagating by expanding. By getting closer to RIS, the same amount of the waves reflected via the specular reflectors can be achieved via the diffuse scatterers as shown in experimentally in \cite{experimental}. This leads to a strong relationship between the RIS size and RIS distances from the transmitter/receivers antennas. Therefore, a critical RIS size in which both models give the same path loss can be defined. To this aim, by equating the two signal levels given in (\ref{specular}) and (\ref{diffuse}), the critical size $A_{min}$ can be determined as follows \cite{nearfar} 
\begin{equation}
\label{th}
\begin{split}
  A_{min}^{kn} = \left[(r_{i}^{kn}
  r_{r}^{n})/(r_{i}^{kn}+r_{r}^{n})\right ]\cdot \lambda .
\end{split}{}
\end{equation}
Here, $A_{min}$ stands for the minimum area for a RIS has the same path loss for each assumption. Having an area larger than $A_{min}$, RIS can act as a specular reflector for these specific $r_{i}$ and $r_{r}$ by adjusting the phases to sustain the same beam curvature. In this case, the path loss equals to the LoS path's with the traveling distance $r_i+r_r$.

It is important to note that having a larger area than $A_{min}$, the specular RIS performance cannot be improved further. This situation can be related with  {\it{beamforming}} mode of the RIS \cite{nearfar}. RIS here steers the wave to the direction of the receiver without changing the beam curvature of the wave. In other words, the RIS reflects the wave to the direction of the receiver, not to the receiver location. And this case is very useful when the exact location of the receiver is unknown or difficult to determine. Yet, this does not correspond to the full potential of the RIS which is capable of focusing all the incident waves to a point. By focusing all the waves, the RIS can prevent the power loss and it is called {\it{focusing}} mode  \cite{nearfar}. When RIS and the antennas are far away from each other, these two operating modes do not give significantly different results in terms of path loss. However, when the RIS and the antennas are closer to each other, the RIS in focusing mode can perform better than the specular reflection. 

\section{System Model}

As a V2X communication system model, it is considered that the autonomous vehicle traveling on the highway is connected to the network via cell-free mobile communication, and several numbers of RIS units have been placed perpendicularly to the road, as shown in Fig. \ref{tam}. Using this model, the effect of beamforming and focusing modes of RIS are examined to improve autonomous vehicle connectivity along a rural highway.  In order to analyze the system, the total power consisting of the power of the LoS component and the power of the RIS-directed links received by the vehicle on the highway between two cell-free BSs is calculated as follow:
\begin{equation}
\begin{split}
\label{powbf}
{P}_{total} \left(r_l,r_i,r_r\right) = \left| S_{l}+(1-\beta) S_{d}+\beta S_{s}\right|^2,
 \end{split}
\end{equation}
where $\beta$ is a binary element which stands for a RIS mode activation multiplier such as  $\beta$ is chosen as $0$ for the focusing mode. And for the beamforming mode, it can take $0$ or $1$ depending on the size of the surface area of RIS. If the RIS area $A$ is larger than  $A_{min}$ in (\ref{th}), $\beta$ takes the value of $1$, otherwise it is $0$. Note that when $\beta$ equals $1$, the received power of the beamforming mode will be the same as in specular reflections case \cite{nearfar}. For simplicity purposes, the phase differences among the LoS waves and RIS-directed waves are eliminated by choosing $e^{-j2\pi (r/\lambda)}=1$.

In this study,  the locations of the RIS units have been optimized to maximize the average received power of the vehicle along the road.  To this aim,  the objective function and the constraints have been set as
\begin{equation}
\begin{aligned}
& \underset{\mathbf{x_I}}{\arg\max}\
& & {P}_{avg}= \frac{1}{J}\sum_{j=1}^{J}  \left| {S}_{l}+(1-\beta) S_{d}+\beta S_{s}\right|^2\\
& \text{subject to}
& & x_I^n\leq D,\ \    x_I^{n}\in \mathbb{Z}^+, \ \ |x_I^{n} - x_I^{n-1}| \ \geq \Delta    \ \ \ &\forall{n} \in \mathbb{Z}^+ \\
&&& {x_V}_j = j\delta_{x_V} \leq D,\ \ \   {x_V}_j \in \mathbb{Z}^+\ \ \ &\forall{j}\in \mathbb{Z}^+\\ 
&&&\text{Beamforming case;}\\
&&&   \ \ \  A_{min}^{kn} + M\beta \geq A \geq A_{min}^{kn} - M(1-\beta)\\
&&&\ \ \  \ \beta \in \{0,1\} \\
&&&\text{Focusing case;}\\
&&&\ \ \  \ \beta = 0.
\end{aligned}
\label{opti2}
\end{equation}
  $D$ is the distance between two BSs and $M$ is a very large number used as big M-constraint in the optimization routine. $\Delta$ is the proximity constraint of RIS locations which is set by the system designer by considering $\Delta > A$ relation. Here, the center points of $n^{th}$ RIS, $k^{th}$ BS and vehicle are defined as $R_I^n = (x_I^n,y_I^n,z_I^n)$, $R_B^k =(x_B^k,y_B^k,z_B^k)$, $R_V = (x_V, y_V,z_V)$, respectively. Accordingly, the distances are defined as $r_{i}^{kn} = \|R_{B}^k - R_{I}^n\|$,  $r_{r}^{n} = \|R_{V} - R_{I}^n\|$ and $r_{l}^{k} = \|R_B^k - R_V\|$. 

The optimization routine outcome  $\mathbf{x_I}$ gives the optimum $x$ coordinates of $N$   RIS units. In the algorithm, the beamforming or the focusing mode is chosen initially depending on the information of vehicle location  and then each mode is optimized individually in the routine.
First, the following propositions are stated that can be used to solve the proposed optimization problem.

\begin{prop}
\label{prop1}
The objective function for focusing mode can be reduced to the following function for each $n$ 
\begin{equation}
\label{prop1eq}
\mathbf{x_I^*} = \underset{x_I^n}{\arg\min}
(r_{r}^{n}\displaystyle\prod_{k=1}^{K} r_{i}^{kn}).
\end{equation}
\end{prop}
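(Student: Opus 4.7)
The plan is to reduce $P_{avg}$ in focusing mode to an objective that (i)~separates across the $N$ RISs and (ii)~for each RIS collapses to the product form in~(\ref{prop1eq}). Substituting $\beta=0$ into~(\ref{powbf}) and using the paper's phase-alignment convention $e^{-j2\pi r/\lambda}=1$ makes both $S_l$ and $S_d$ real and non-negative, so $P_{avg}=\frac{1}{J}\sum_j(S_l+S_d)^2$. First I would discard terms independent of the RIS coordinates: $S_l$ depends only on the BS--vehicle distances $r_l^k$, so the $S_l^2$ piece is a constant and drops out of the argmax, leaving $\frac{2}{J}\sum_j S_l S_d+\frac{1}{J}\sum_j S_d^2$ to be maximized.

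Next, I would exploit the additive structure $S_d=\sum_{n=1}^{N}f_n$, where $f_n\propto\sum_{k=1}^{K}\sqrt{G_I^n(-\hat{r}_i^{kn})G_I^n(\hat{r}_r^n)}/(r_i^{kn}r_r^n)$ depends only on the coordinates of the $n$-th RIS (and on the vehicle index $j$). The cross term $S_l S_d$ and the diagonal part $\sum_n f_n^2$ of $S_d^2$ are both sums of $N$ contributions that are uncoupled across $x_I^n$; provided the off-diagonal mutual-interference contribution $\sum_{n\ne m}f_n f_m$ is negligible---which is natural when LoS dominates each individual RIS field, or when distinct RIS units serve spatially disjoint stretches of the highway---the joint problem decouples into $N$ independent subproblems, each of the form $\max_{x_I^n}\frac{1}{J}\sum_j f_n(j;x_I^n)$.

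The third and hardest step is the per-RIS reduction. Putting the $K$ summands in $f_n$ over a common denominator gives
\begin{equation*}
f_n\;\propto\;\frac{\sum_{k=1}^{K}\sqrt{G_I^n(-\hat{r}_i^{kn})G_I^n(\hat{r}_r^n)}\,\prod_{\ell\ne k}r_i^{\ell n}}{r_r^n\prod_{k=1}^{K}r_i^{kn}}.
\end{equation*}
I would then argue that the numerator depends only weakly on $x_I^n$: the antenna-gain square roots vary slowly with reflection angle, and for $K=2$ the triangle inequality gives $r_i^{1n}+r_i^{2n}\approx D$ whenever the RIS lies close to the segment joining the two BSs, so the partial-product sum in the numerator is bounded by the BS separation $D$. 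The denominator, by contrast, scales multiplicatively with all $K+1$ distances and is far more sensitive to $x_I^n$. Under this slow-variation approximation, maximizing each $f_n$ is equivalent to minimizing $r_r^n\prod_{k=1}^{K}r_i^{kn}$, which is exactly~(\ref{prop1eq}). The main obstacle is making this slow-variation reduction rigorous---either by quantifying the residual error or by imposing additional structural assumptions (e.g.\ equal antenna gains and RIS restricted to the line joining the BSs) under which the numerator becomes exactly constant and the reduction becomes an equality.
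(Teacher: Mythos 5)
Your proposal is correct in substance and follows the same skeleton as the paper's proof --- set $\beta=0$, discard the constant LoS-squared contribution, decouple the problem across the $N$ RIS units, and reduce each per-RIS subproblem to minimizing $r_r^n\prod_{k}r_i^{kn}$ --- but you justify the two nontrivial steps differently. Where you expand $(S_l+S_d)^2$ and must assume the inter-RIS cross terms $\sum_{n\neq m}f_nf_m$ are negligible, the paper instead drops the squared magnitude outright, replacing $\arg\max\sum_j|\sum_{n,k}1/(r_i^{kn}r_r^n)|^2$ by $\arg\max\sum_j\sum_{n,k}1/(r_i^{kn}r_r^n)$ on the grounds that all summands are non-negative; this makes the objective exactly separable in $n$ (no cross terms to dismiss), though it silently commutes the $\arg\max$ with the sum over vehicle positions $j$, which is not an exact equivalence either. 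For the per-RIS step, the paper writes $1/(r_i^{kn}r_r^n)$ in coordinates, observes the maximizer is $x_I^n=x_B^k$ (with $x_I^n=x_V$ excluded as physically impossible), and then simply asserts the product form of (\ref{prop1eq}); your common-denominator computation, together with the observation that the numerator $\sum_k\prod_{\ell\neq k}r_i^{\ell n}$ is nearly constant when the RIS lies between the two BSs (since $r_i^{1n}+r_i^{2n}\approx D$ for $K=2$), is precisely the missing link the paper never supplies, so your version is, if anything, the more complete derivation. The residual gaps you flag (the cross terms and the slow variation of the numerator) are real, but they are present --- unacknowledged --- in the paper's own proof as well.
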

\begin{proof}  In the optimization problem of (\ref{opti2}), only $r_{i}$ and $r_{r}$ variables are the functions of the optimization parameter $\mathbf{x_I}$. When $\beta=0$ is inserted into the objective function and the constant terms are excluded, it can be reduced to 
\begin{equation}
\begin{split}
\label{d}
& \underset{\mathbf{x_I}}{\arg\max} 
\sum_{j=1}^{J} \left| \sum_{n=1}^{N}\sum_{k=1}^{K}\frac{
1}{r_{i}^{kn}r_{r}^{n}}\right|^2.
 \end{split}
\end{equation}
Since  ${r_{i}^{kn}}r_{r}^{n} \geq 0 $, the maximization problem can be considered as
\begin{equation}
\begin{split}
\label{efiki}
& \underset{\mathbf{x_I}}{\arg\max}
\sum_{j=1}^{J}\sum_{n=1}^{N}\sum_{k=1}^{K}\frac{ 
1}{r_{i}^{kn}r_{r}^{n}},
 \end{split}
\end{equation}
where the term inside the summation is 
\begin{equation}
\label{numeric}
\frac{ 
1}{r_{i}^{kn}r_{r}^{n}}= \frac{1}{\sqrt{
\begin{aligned}
[(x_B^k- x_I^n)^2+(y_B^k- y_I^n)^2+(z_B^k- z_I^n)^2]\\
[(x_{V}- x_I^n)^2+(y_V- y_I^n)^2+(z_V- z_I^n)^2]
\end{aligned}}}.
\end{equation}

It is clearly seen that the maximum value of (\ref{numeric}) is achieved at either  $ x_I^n = x_B^k$ or $ x_I^n = x_{V}$.  Since  $ x_I^n = x_{V}$ is not physically possible, the maximum value of (\ref{efiki})  is obtained at $ x_I^n = x_B^k$ for all $n$. It corresponds to the nearest points to the BSs $ x_I^n = x_B^k$. Thus, the dependence on the vehicle and the number of RIS can be eliminated from the summation:
\begin{equation}
\begin{split}
\label{eq}
& \underset{\mathbf{x_I}}{\max}\  \sum_{j=1}^{J}\sum_{n=1}^{N}\sum_{k=1}^{K}\frac{ 
1}{r_{i}^{kn}r_{r}^{n}} = \sum_{j=1}^{J}\sum_{n=1}^{N}\underset{\mathbf{x_I}}{\max}\sum_{k=1}^{K} \frac{1}{r_{i}^{kn}r_{r}^{n}}.
 \end{split}
\end{equation}
The resulting optimization problem now can be solved analytically and the solution is equal to physically possible minimum value of $r_{i}^{kn}$. Proposition 1 is thus proved. 
\end{proof}

\begin{prop}
\label{prop2}
 
 In the beamforming mode, the optimum solution of $x_I^n$ can be achieved for each $n$ as

\begin{equation}
\label{prop2eq}
\mathbf{x_I^*} = \underset{{x_I^n}}{\arg\min}
\displaystyle\prod_{k=1}^{K} \left|({r_{i}^{kn}+r_{r}^{n}}) - \frac{(r_{i}^{kn}r_{r}^{n})\cdot \lambda}{A}\right|
\end{equation}
under the assumption of $r_{r}^{n} \gg r_{i}^{kn}$.
\end{prop}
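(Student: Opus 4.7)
The plan is to adapt the argument of Proposition~\ref{prop1} while carefully handling the big-$M$ feasibility condition that distinguishes beamforming from pure focusing. First, I would set $\beta=1$ in (\ref{opti2}), which activates the specular contribution $S_s$ and discards the diffuse term. After dropping the $\mathbf{x_I}$-independent LoS component and invoking the phase-alignment assumption following (\ref{powbf}), the positivity/decoupling argument used in (\ref{d})--(\ref{efiki}) reduces the $|S_s|^2$ maximization to, for each $n$ independently,
\[
\max_{x_I^n}\ \sum_{k=1}^{K}\frac{1}{r_i^{kn}+r_r^n}.
\]

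Next, I would fold in the mode-activation constraint. For $\beta=1$ to be admissible, the big-$M$ inequality in (\ref{opti2}), combined with the definition of $A_{min}^{kn}$ in (\ref{th}), is algebraically equivalent to $(r_i^{kn}+r_r^n)-r_i^{kn}r_r^n\lambda/A\ge 0$. Because $1/(r_i^{kn}+r_r^n)$ is monotonically decreasing in its argument, the per-$k$ contribution to the sum is maximized on the boundary of this feasible set, i.e.\ where the residual $(r_i^{kn}+r_r^n)-r_i^{kn}r_r^n\lambda/A$ vanishes; minimizing the absolute value of that residual places the RIS at the specular threshold for the $k$-th BS, where the specular contribution attains the maximum compatible with the beamforming regime.

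Since the $K$ per-BS boundary loci typically cannot be met simultaneously by a single $x_I^n$, I would aggregate the $K$ residuals into one scalar criterion via a product, obtaining the claimed expression
\[
\mathbf{x_I^*}=\arg\min_{x_I^n}\prod_{k=1}^{K}\left|(r_i^{kn}+r_r^n)-\frac{r_i^{kn}r_r^n\lambda}{A}\right|.
\]
This product vanishes whenever any one factor does and otherwise jointly penalizes deviations from all $K$ thresholds. The hypothesis $r_r^n\gg r_i^{kn}$ is then invoked so that $r_r^n$ is effectively constant on the scale at which $x_I^n$ varies, making each factor a near-affine expression in $r_i^{kn}$ and giving the product the closed-form structure that renders the minimization analytically tractable.

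The main obstacle I expect is precisely the transition from the constrained sum-maximization to the product-of-residuals form: the product aggregation is a standard complementarity heuristic but is not a direct algebraic consequence of the sum-form objective. A fully rigorous argument would require either a KKT/Lagrangian treatment of the inequality-constrained maximization or an enumeration of active-set patterns for the $K$ per-BS threshold constraints, and the far-field hypothesis $r_r^n\gg r_i^{kn}$ is what keeps either route analytically tractable.
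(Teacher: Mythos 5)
Your proposal follows essentially the same route as the paper's proof: set $\beta=1$ to keep only the specular term, use positivity to reduce the squared-magnitude objective to a per-$n$ maximization of $\sum_k 1/(r_i^{kn}+r_r^n)$, fold the $A \geq A_{min}^{kn}$ condition in as the residual constraint $(r_i^{kn}+r_r^n)-r_i^{kn}r_r^n\lambda/A \geq 0$, and aggregate over $k$ by minimizing the product of absolute residuals, with $r_r^n \gg r_i^{kn}$ serving to make $A_{min}$ effectively independent of the vehicle position. The product-of-residuals step that you explicitly flag as a heuristic is carried out just as informally in the paper (which simply ``inserts'' the constraint into its intermediate product $\prod_k (r_i^{kn}+r_r^n)$ in (\ref{bf4})), so your attempt matches the paper's argument and level of rigor while being more candid about where it is not airtight.
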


\begin{proof} 

Since  ${r_{i}^{kn}}$ and $r_{r}^{n}$ are the only ($\mathbf{x_I}$) dependent variables the optimization problem (\ref{opti2}) is reduced to the following form for beamforming mode:
\begin{equation}
\begin{split}
\label{dd}
& \underset{\mathbf{x_I}}{\arg\max}
\sum_{j=1}^{J} \left| \sum_{n=1}^{N}\sum_{k=1}^{K}\frac{(1-\beta)}{(r_{i}^{kn}r_{r}^{n})} + \frac{\beta}{(r_{i}^{kn}+r_{r}^{n}) }\right|^2.
 \end{split}
\end{equation}
It is seen that the dominant term in (\ref{dd}) belongs to the specular part, thus, the diffuse part can be eliminated with  $\beta = 1$. Then the objective function can be written as
\begin{equation}
\begin{split}
\label{d2}
& \underset{\mathbf{x_I}}{\arg\max}
\sum_{j=1}^{J} \left| \sum_{n=1}^{N}\sum_{k=1}^{K}\frac{ 
1}{r_{i}^{kn}+r_{r}^{n} }\right|^2.
 \end{split}
\end{equation}
Since ${r_{i}^{kn}}+r_{r}^{n} \geq 0 $, this maximization problem is equivalent to the following:
\begin{equation}
\begin{split}
\label{spec}
& \underset{\mathbf{x_I}}{\arg\max}
\sum_{j=1}^{J}\sum_{n=1}^{N}\sum_{k=1}^{K}\frac{ 
1}{r_{i}^{kn}+r_{r}^{n}}.
 \end{split}
\end{equation}
The dependence on the vehicle and the number of RIS are  eliminated from the summation as explained in proof of Proposition \ref{prop1}:  
\begin{equation}
\begin{split}
\label{eq2}
& \underset{\mathbf{x_I}}{\max}\  \sum_{j=1}^{J}\sum_{n=1}^{N}\sum_{k=1}^{K}\frac{ 
1}{r_{i}^{kn}+r_{r}^{n}} = \sum_{j=1}^{J}\sum_{n=1}^{N}\underset{\mathbf{x_I}}{\max}\sum_{k=1}^{K} \frac{1}{r_{i}^{kn}+r_{r}^{n}}.
 \end{split}
\end{equation}Moreover for every $n$, this function can be converted into 
\begin{equation}
\label{bf4}
\underset{{x_I^n}}{\arg\min}
\displaystyle\prod_{k=1}^{K} (r_{i}^{kn}+r_{r}^{n}).
\end{equation}

Notice that for specular mode,  this minimization problem subjects to the constraint specified by (4). $A_{min}$, in general,  depends on the vehicle location, however,  with the assumption of $r_{r} \gg r_{i}$, it will be fixed with respect to the vehicle location. 
 By rearranging (4) to $A \geq A_{min}^{kn} = \left[(r_{i}^{kn}r_{r}^{n})/(r_{i}^{kn}+r_{r}^{n})\right ]\cdot \lambda$ and inserting it into the (\ref{bf4}) as a constraint $(r_{i}^{kn}+r_{r}^{n})-[(r_{i}^{kn} r_{r}^{n})\cdot\lambda/A] \geq 0$, (\ref{prop2eq}) can be obtained.
 \end{proof}

\begin{algorithm}[t]
\SetAlgoLined
Select the mode as \textit{Beamforming} or \textit{Focusing}; \\
\textbf{Initialize}  $n=1$, $\Omega : x \in [0, D]$, $\Delta$\;
\Repeat{$n > N$}{
\eIf{Focusing Mode is selected}{
Find the optimum points ($\mathbf{x_{I}^{*}}$) by solving (\ref{prop1eq}).\\
}{
 Find the optimum points ($\mathbf{x_{I}^{*}}$) by solving (\ref{prop2eq}).
 }
 \eIf{$\mathbf{x_{I}^{*}}$ has more than one element}
 {$x_I^n = min(\mathbf{x_{I}^{*}})$}
 {$x_I^n = x_{I}^{*}$}
 $\Omega : \Omega \ \backslash \ \{ |x-x^n|<\Delta \}$\;
 $n = n+1$\;
 }
\caption{Optimum RIS positioning}
\label{algo}
\end{algorithm}
The optimum locations of RIS units can be determined using Propositions 1 and 2 in the proposed optimization algorithm given in Algorithm \ref{algo}. This algorithm recursively selects the optimum positions for the RIS units, by starting from the location of the first RIS unit, and then by using this location, it determines the location of the second RIS unit, and so on. As stated,  the optimum locations are expected to be different in the focusing and the beamforming modes. In the focusing mode, the optimum positions are the nearest points to the BSs, and the performance decreases by getting away from the BSs. On the other hand, for beamforming mode, the optimum location is determined by (\ref{th}) for the first RIS. Then the following RIS should be placed by getting closer to the BSs. In locating two adjacent RIS units, the distance between them is set as the pre-defined $\Delta$ value.

\section{Numerical Results}

\begin{figure}[t]
\centering
\includegraphics[width=\columnwidth]{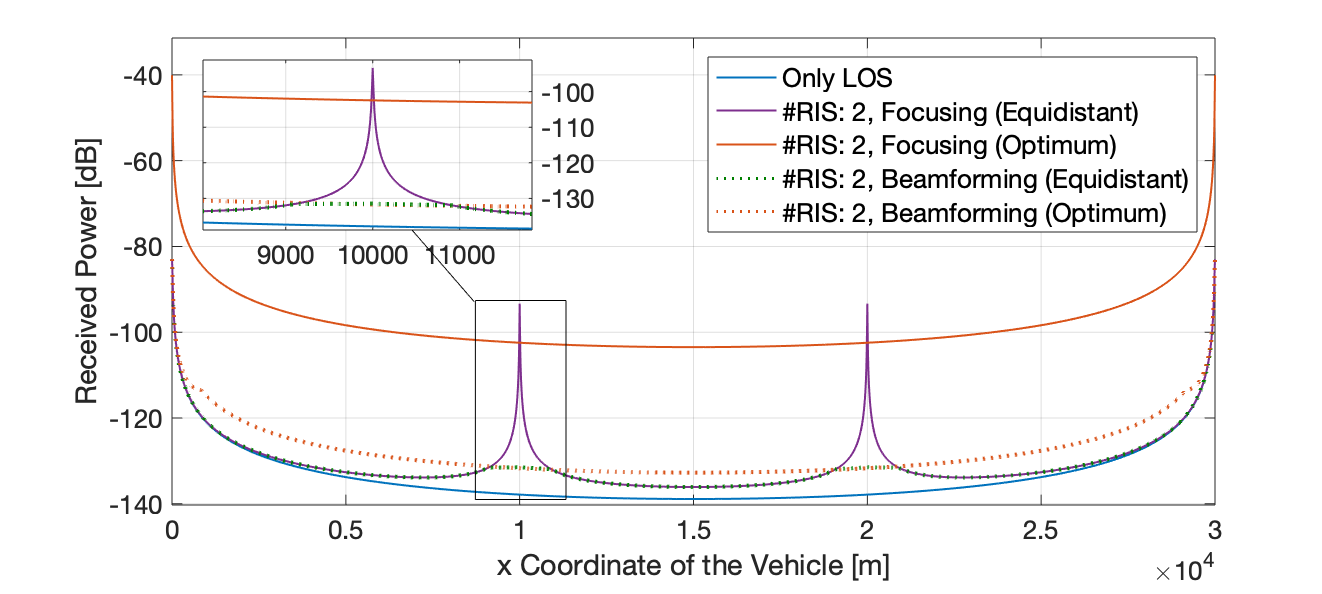}
\captionsetup{justification=centering}
\caption{The comparison of optimized and equidistant RIS positioning effects on received power value for 2 RIS units.}
\label{split1}
\end{figure}
\begin{figure}[t]
\centering
\includegraphics[width=\columnwidth]{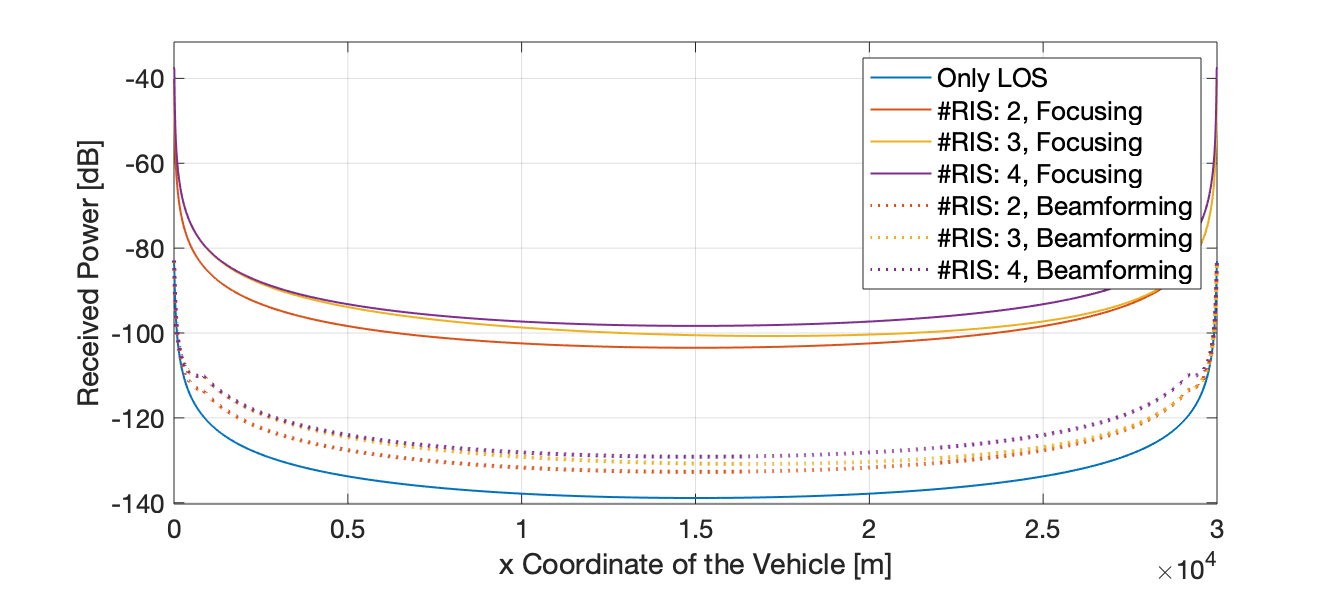}
\captionsetup{justification=centering}
\caption{Received power values for 2, 3 and 4 optimally located RIS units}
\label{split2}
\end{figure}

 \begin{figure*}[t]
  \centering
  \subfloat[Only LOS]{\includegraphics[width=0.3\textwidth]{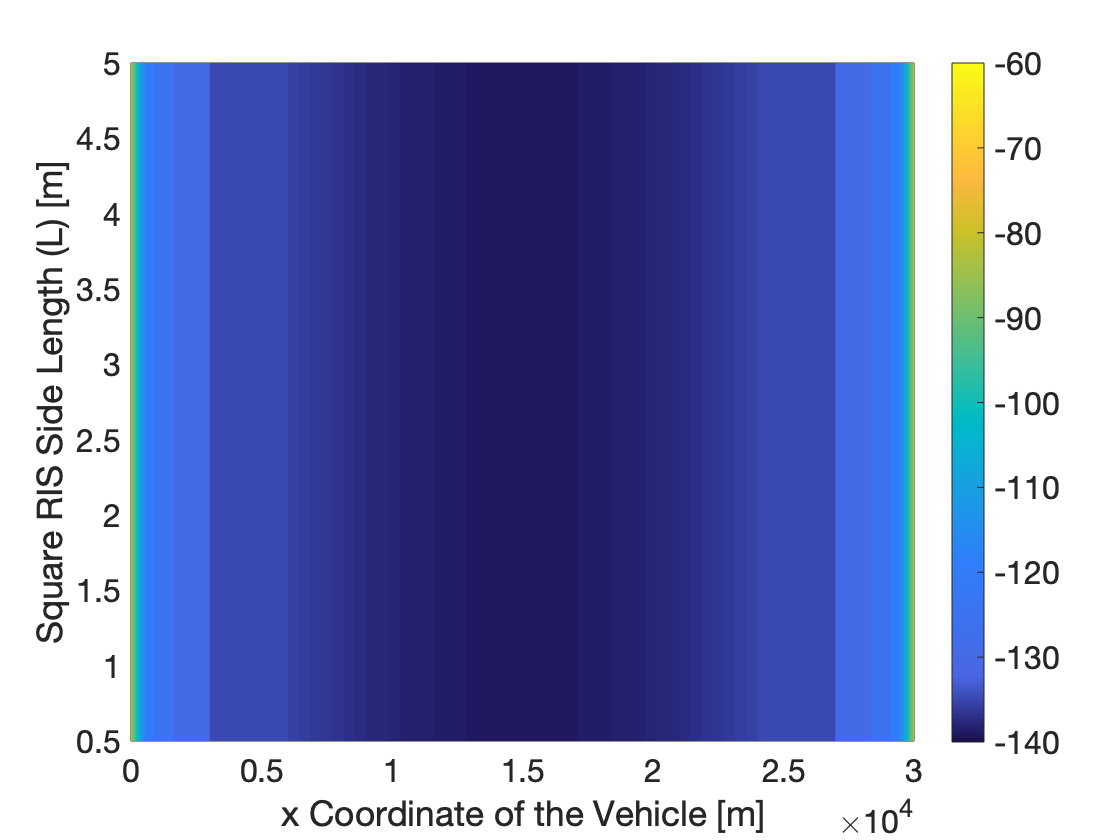}\label{fig:f1}}
  \subfloat[RIS Aided Beamforming]{\includegraphics[width=0.3\textwidth]{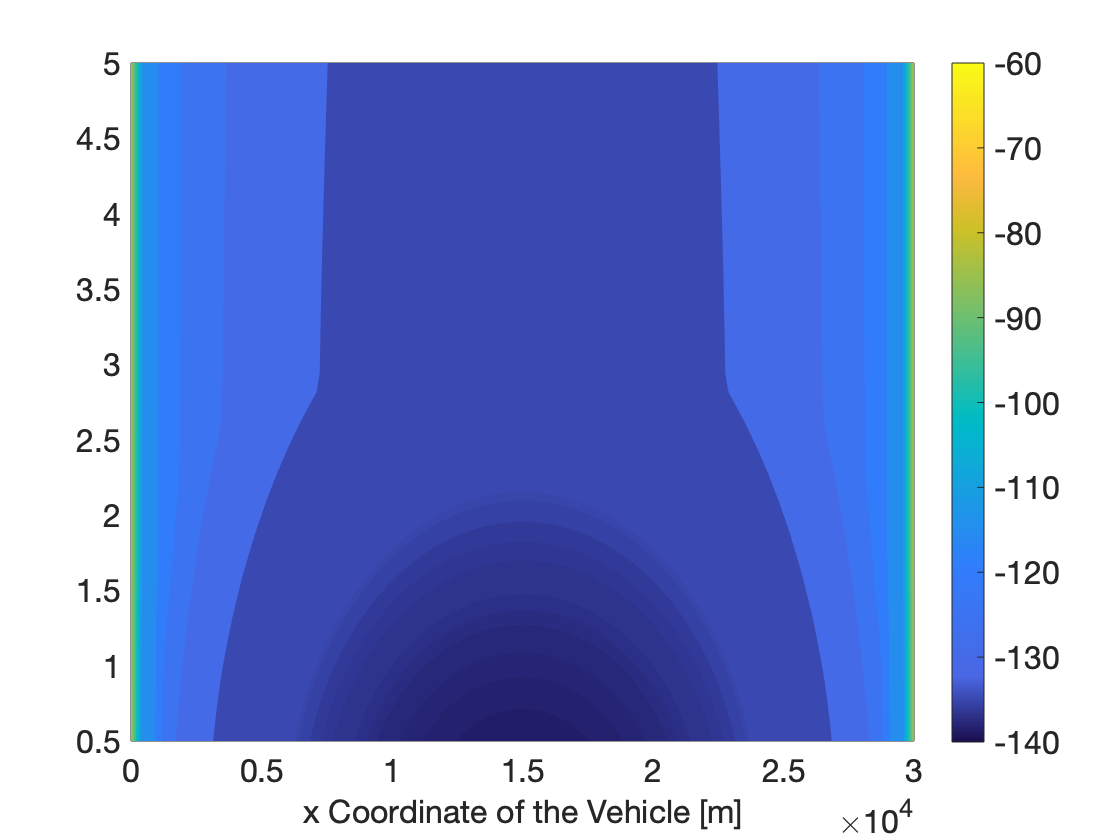}\label{fig:f2}}
  \subfloat[RIS Aided Focusing]{\includegraphics[width=0.3\textwidth]{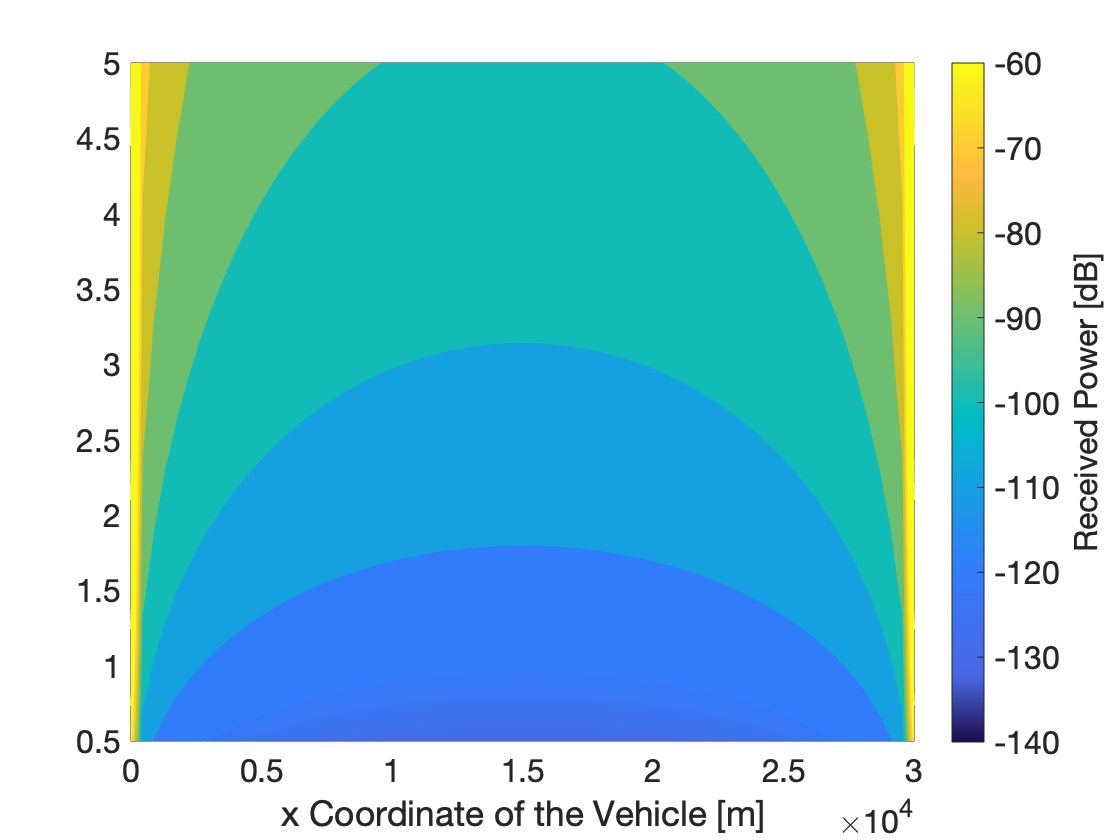}\label{fig:f3}}
  \caption{Optimum received power values depending on RIS sizes in different modes.}
  \label{length}
\end{figure*}

In all simulations, the heights of the RIS units, BSs and the vehicle have been selected as $z_{I} = 12m$, $z_{B} = L/2 = 3/2 m$ and $z_{V}=1m$, respectively. Here, $L$ is the side length of a square RIS unit. Two BSs ($K=2$) that are separated with a distance of $D=30km$ are considered. The surface areas of all RIS units ($A$) and the highway width ($W$) has been selected as $9 m^2$ and $10m$, respectively. Moreover, $y_{I} = -W/2$, $y_{B} = W/2 $ and $y_{V} = 0$ has been used. The wavelength is chosen as $\lambda = 10.7 mm$ corresponding to 28 GHz carrier frequency for NR, and all the antenna gains have selected as $G=1$. The minimum distance between two RIS center points has been set as $\Delta =10m$.

First, the received power levels of the system are compared for two  RIS units which are placed along the road using the optimization routine, specifically for the beamforming and focusing modes. The optimum locations $\mathbf{x_I}$ are found using Algorithm \ref{algo}  as $\{0, 30000\}$ meters for focusing mode and as $\{839, 29161\}$ meters for beamforming mode. The received power levels obtained for two RIS  placed at these locations are presented in Fig. \ref{split1}. For comparison purpose, the power levels of two equidistantly placed RIS units along the road (i.e. $x=\{10000,20000\}$ meters) and only LoS as well, are plotted in Fig. \ref{split1}.
 As seen, the use of the RIS units improves the power levels within the small portion of the road. On the other hand, the optimized placement obtained by the proposed algorithm gives a continuous improvement all along the road. It is also observed from Fig. \ref{split1} that the focusing mode outperforms the beamforming mode as expected. Note that the RIS  units in the focusing mode are only applicable when the exact location of the vehicle is known, thus, all the power can be transmitted to the vehicle. However, it is a quite difficult task to monitor the position of the vehicle all the time. On the other hand, the RIS in beamforming mode requires only the direction information which is more attainable than the exact location, though its performance is worse than the focusing mode.

In order to examine the effect of the number of RIS on the received power,  3 and 4 RIS units placement are considered. For 3 RIS,  the locations are obtained as $\{0, 10, 30000\}$ meters for the focusing mode and as $\{829, 839, 29161\}$ meters for the beamforming mode. And 4 RIS units,  the locations are obtained as  $\{0, 10, 29990, 30000\}$ meters for the focusing mode and $\{829, 839, 29161, 29171\}$ meters for the beamforming mode. As seen in Fig. \ref{split2}, the increase in the number of RIS results in the increase in the power level for both modes. Here, it can be seen that optimizing the locations using 2 and 4 RIS units can increase the received power in focusing mode around $35$ and $40$ dB, respectively when compared to the only LoS communication. If the beamforming case is selected, the amount of the increase becomes about $6$ and $9$ dB, for 2 and 4 RIS units respectively.  It is also observed that even numbers of RIS units sustain the best-received power along the way. This issue stems from the uneven positions of the odd numbers of RIS units with respect to the middle point of the road. Since the best results are achieved when the RIS units are located around the BSs, odd numbers of RIS units are obliged to be located asymmetrically. 

The effect of the size of the RIS units on the  RIS modes is also studied. To this aim, 2 RIS units with various sizes are considered and their locations are optimized. The received power vs. the side length and the vehicle location are presented in Fig. \ref{fig:f2} for the beamforming mode, in Fig. \ref{fig:f3} for the focusing mode and, for comparison, in Fig. \ref{fig:f1} for LoS. As seen in (\ref{opti2}), the increase in the RIS sizes in focusing mode leads to higher amounts of the received power. Of course, this increase is limited by the physical law which states that the received power cannot be higher than the transmitted one. On the other hand, in the beamforming mode, the increase in RIS sizes more than $L_{min} = \sqrt{A_{min}}$ does not have any effect on the received power performance. However, the increased numbers of beamformer RIS units can increase the received power since they can lead the signals towards different directions. This situation is not valid for focusing via RIS units, as the performance depends on the area. 

\vspace{-0.2cm}
\section{Conclusion}

In this study, a solution to improve the reliability of the autonomous vehicular networks is proposed via real-time software-controlled reconfigurable intelligent surfaces (RISs).
The optimum locations of the RIS units are determined by solving the formulated optimization problem, considering both the beamforming and focusing operating modes. Significant SNR gains are observed by the optimal deployment of RIS units.

\vspace{-0.2cm}
\section*{Acknowledgment}

The authors would like to thank Ozge Cepheli for the valuable discussions.

\ifCLASSOPTIONcaptionsoff
  \newpage
\fi

\vspace{-0.2cm}

\bibliographystyle{IEEEtran}
\bibliography{main}

\end{document}